\documentclass[10pt, a4paper]{article}
\usepackage{CJKutf8}
\usepackage{geometry}
\geometry{left=1cm, right=1cm, top=1cm, bottom=2cm}
\geometry{left=3cm, right=3cm, top=3cm, bottom=3cm}

%%%%%%%%%%%%%%%%%%%%%%%%%%%%%%%%%%%%%%%%%%%%%%%%%%%%%%%%%%
%%% usepackage
%%%%%%%%%%%%%%%%%%%%%%%%%%%%%%%%%%%%%%%%%%%%%%%%%%%%%%%%%%

\usepackage{amssymb}
\usepackage{amsmath}
\usepackage{amsthm}
\usepackage{amsfonts}
\usepackage{mathtools}
\usepackage{mathrsfs}
\usepackage{bm} % 处理数学公式中的黑斜体的宏包

\usepackage{graphicx}
\usepackage{dcolumn}
\usepackage[ruled,lined,algonl]{algorithm2e}
\usepackage{color}
\usepackage{autobreak}
\allowdisplaybreaks

%%%%%%%%%%%%%%%%%%%%%%%%%%%%%%%%%%%%%%%%%%%%%%%%%%%%%%%%%%
%%% 环境定义
%%%%%%%%%%%%%%%%%%%%%%%%%%%%%%%%%%%%%%%%%%%%%%%%%%%%%%%%%%

%%% 修改想法备注

%%% 定理环境
\theoremstyle{plain}

\newtheorem{corollary}{Corollary}

\newtheorem{proposition}{Proposition}

\theoremstyle{definition}

\theoremstyle{remark}

%%% 修改定理环境中列表编号和缩进

%%% 步骤

%%%%%%%%%%%%%%%%%%%%%%%%%%%%%%%%%%%%%%%%%%%%%%%%%%%%%%%%%%
%%% 数学符号声明
%%%%%%%%%%%%%%%%%%%%%%%%%%%%%%%%%%%%%%%%%%%%%%%%%%%%%%%%%%

% 强调

% 域

% 多项式集合

% 半代数系统

% 理想

% 理想的基

% 点或向量

% 域的代数闭包

% 域的扩域

% 域的扩域

% 各种数域

% 多项式的系数域

% 多项式的系数环

% 多项式环 k[X] R[X]

% 多项式环 k[x1,...,x_n]

% 理想的根

% Grobner

% 算法中与,或

% 充分性 必要性
%\newcommand{\chongfen}{$(\Longrightarrow)$~}
%\newcommand{\biyao}{$(\Longleftarrow)$~}

% 大于等于号重定义
%\renewcommand{\geq}{\geqslant}
%\renewcommand{\leq}{\leqslant}

% 整除

%%%%%%%%%%%%%%%%%%%%%%%%%%%%%%%%%%%%%%%%%%%%%%%%%%%%%%%%%

% 零点

% 零点

% 零点

% 多项式的类

% 系数

% 容度

% 本原部分

% 全次数

% 导元次数

% 初式

% 符号函数

% 余式

% 首项系数

% 首项

% 首单项式

% 导元

% 尾式reductum

% 伪商

% 伪余

% 饱和

% 最小公倍式

% 特征

% 结式

% 边界多项式

% 判别式

% 互异实零点个数

% 理想的代数簇

% 代数簇对应的理想

% 项的集合

% 单项式的集合

% 取最大元

% 有限域

% 集合的势

% 多项式的multideg

% 项序

%%%%%%%%%%%%%%%%%%%%%%%%%%%%%%%%%%%%%%%%%%%%%%%%%%%%%%%%%%
%%% 算法相关
%%%%%%%%%%%%%%%%%%%%%%%%%%%%%%%%%%%%%%%%%%%%%%%%%%%%%%%%%%

%%%%%%%%%%%%%%%%%%%%%%%%%%%%%%%%%%%%%%%%%%%%%%%%%%%%%%%%%%
%%% 其他
%%%%%%%%%%%%%%%%%%%%%%%%%%%%%%%%%%%%%%%%%%%%%%%%%%%%%%%%%%

% Determinant

%
% Algebraic variables

% Transcendental variables

% Regular elements

% total quotient ring

% separant

% K(trans var)

% algebraic closure of K

%\newcommand{\overkx}{\overk[\point{x}]}

% finite field F_q

\usepackage{authblk}

\begin{document}
\begin{CJK}{UTF8}{gbsn}

\title{Stability analysis of heterogeneous oligopoly games of increasing players with quadratic costs}

\author{Xiaoliang Li\thanks{Corresponding author: xiaoliangbuaa@gmail.com}}
\affil{School of Finance and Trade, Dongguan City College, Dongguan, P. R. China}

%\author[a]{Xiaoliang Li}
%
%\author[b]{Author 2\thanks{Corresponding author: name@mail.com}}
%
%\affil[a]{School of Finance and Trade, Dongguan City College, Dongguan, P. R. China}
%
%\affil[b]{School, University, City, Country}

\maketitle

\begin{abstract}
In this discussion draft, we explore heterogeneous oligopoly games of increasing players with quadratic costs, where the market is supposed to have the isoelastic demand. For each of the models considered in this draft, we analytically investigate the necessary and sufficient condition of the local stability of its positive equilibrium. Furthermore, we rigorously prove that the stability regions are enlarged as the number of involved firms is increasing.
\end{abstract}

%\section{Introduction}

\section{General Assumptions}

Motivated by \cite{Tramontana2015L} , we consider a market served by firms with heterogeneous decision mechanisms producing homogeneous products. We use $q_i(t)$ to denote the output of firm $i$ at period $t$. The cost function of firm $i$ is supposed to be quadratic, i.e., $C_i(q_i)=c q_i^2$. Note that $c$ is a positive parameter and identical for all our firms. Furthermore, assume that the demand function of the market is isoelastic, which is founded on the hypothesis that the consumers have the Cobb-Douglas utility function. Hence, the price of the product should be
$$p(Q)=\frac{1}{Q}=\frac{1}{\sum_i q_i},$$
where $Q=\sum_i q_i$ is the total supply.

\section{Game of Two Firms}\label{sec:duopoly}

%\subsection{Model GB}

First, let us consider a duopoly game, where the first firm adopts a so-called \emph{gradient adjustment mechanism}, while the second firm adopts the \emph{best response mechanism}. Both of these two mechanisms are boundedly rational. To be exact, the first firm increases/decreases its output according to the information given by the marginal profit of the last period, i.e., at period $t+1$,
\begin{equation}
	q_1(t+1)=q_1(t) + k q_1(t) \frac{\partial \Pi_1(t)}{\partial q_1(t)},
\end{equation}
where $\Pi_1(t)=\frac{q_1(t)}{q_1(t)+q_2(t)}-cq_1^2(t)$ is the profit of firm 1 as period $t$, and $k>0$ is a parameter controlling the adjustment speed. It is worth noting that the adjustment speed depends upon not only the parameter $k$ but also the size of the firm $q_1(t)$.

The second firm knows exactly the form of the price function, thus can estimate its profit at period $t+1$ to be 
\begin{equation}
	\Pi_2^e(t+1)=\frac{q_2(t+1)}{q_1^e(t+1)+q_2(t+1)}-cq_2^2(t+1),
\end{equation}
where $q_1^e(t+1)$ is its expectation of the output at period $t+1$ of firm 1. It is realistic that firm 2 has no idea about its rival's production plan of the present period. We suppose that firm 2 have a naive expectation of its competitor to produce the same quantity as the last period, i.e., $q_1^e(t+1)=q_1(t)$. Hence,
\begin{equation}
	\Pi_2^e(t+1)=\frac{q_2(t+1)}{q_1(t)+q_2(t+1)}-cq_2^2(t+1).
\end{equation}
In order to maximize the expected profit, the second firm try to solve the first condition $\partial \Pi_2^e(t+1) / \partial q_2(t+1)=0$, i.e.,
\begin{equation}\label{eq:gb-first-order}
	q_1(t)-2\,cq_2(t+1)(q_1(t)+q_2(t+1))^2=0.
\end{equation}
It should be noted that \eqref{eq:gb-first-order} is an equation of cubic polynomial. Although a general cubic polynomial has at most three real roots, it is easy to know that there exist one single real solution of  \eqref{eq:gb-first-order} for $q_2(t+1)$, but its closed-form expression is particularly complex. However, we suppose that firm 2, by observing the rival's output at the last period, has such ability of computation to find the best response, which is denoted as $R_2(q_1(t))$.

Therefore, the model could be described as the following discrete dynamic system.

\begin{equation}\label{eq:gb-map}
	T_{GB}(q_1,q_2): 
	\left\{\begin{split}
		&q_1(t+1)=q_1(t) + k q_1(t)\left[\frac{q_2(t)}{(q_1(t)+q_2(t))^2}-2\,cq_1(t)\right],\\
		&q_2(t+1)=R_2(q_1(t)).
	\end{split}
	\right.
\end{equation}

By setting $q_1(t+1)=q_1(t)=q_1$ and $q_2(t+1)=q_2(t)=q_2$, the equilibrium can be identified by
\begin{equation}
	\left\{\begin{split}
		&q_1=q_1 + k q_1\left(\frac{q_2}{(q_1+q_2)^2}-2\,cq_1\right),\\
		&q_2=R_2(q_1),
	\end{split}
	\right.
\end{equation}
where $q_2=R_2(q_1)$ can be reformulated to $q_1-2\,cq_2(q_1+q_2)^2=0$ according to \eqref{eq:gb-first-order}. Thus, we have
\begin{equation}
	\left\{\begin{split}
		&k q_1\left(\frac{q_2}{(q_1+q_2)^2}-2\,cq_1\right)=0,\\
		&q_1-2\,cq_2(q_1+q_2)^2=0,
	\end{split}
	\right.
\end{equation}
which could be solved by a unique solution
$$E_{GB}^1=\left(\frac{1}{\sqrt{8c}},\frac{1}{\sqrt{8c}}\right).$$ 
It should be noted that $(0,0)$ is not an equilibrium for it is not defined for the iteration map \eqref{eq:gb-map}. In order to investigate the local stability of an equilibrium $(q_1^*,q_2^*)$, we consider the Jacobian matrix of the form
\begin{equation}
	J_{GB}(q_1^*,q_2^*)=\left[\begin{matrix}
		\frac{\partial q_1(t+1)}{\partial q_1(t)}\big|_{(q_1^*,q_2^*)} & \frac{\partial q_1(t+1)}{\partial q_2(t)}\big|_{(q_1^*,q_2^*)}\\
		\frac{\partial q_2(t+1)}{\partial q_1(t)}\big|_{(q_1^*,q_2^*)} & \frac{\partial q_2(t+1)}{\partial q_2(t)}\big|_{(q_1^*,q_2^*)}\\
	\end{matrix}
	\right].
\end{equation}

It is easy to obtain that
\begin{equation}
	\begin{split}
& \frac{\partial q_1(t+1)}{\partial q_1(t)}\big|_{(q_1^*,q_2^*)}= 1+kq_2^*\frac{q_2^*-q_1^*}{(q_1^*+q_2^*)^3}-4\,ckq_1^*,\\
&\frac{\partial q_1(t+1)}{\partial q_2(t)}\Big|_{(q_1^*,q_2^*)}= kq_1^*\frac{q_1^*-q_2^*}{(q_1^*+q_2^*)^3}.
	\end{split}
\end{equation}
Furthermore, the derivative of $q_2(t+1)$ with respect to $q_2(t)$ is $0$ as $R_2$ does not involve $q_2$. However, the derivative of $q_2(t+1)$  with respect to $q_1(t)$ may not be directly obtained. By virtue of the method called implicit differentiation, it can be acquired that
\begin{equation}
	\frac{\partial q_2(t+1)}{\partial q_1(t)}\Big|_{(q_1^*,q_2^*)}
	=-\frac{4\,cq_1^*q_2^*+4\,cq_2^{*2}-1}{2\,c(q_1^{*2}+4\,q_1^*q_2^*+3\,q_2^{*2})}.
\end{equation}

At $E_{GB}^1=(1/\sqrt{8c},1/\sqrt{8c})$, we have that
\begin{equation}
	J_{GB}(E_{GB}^1)=\left[\begin{matrix}
		1-k\sqrt{2\,c} & 0\\
		0 & 0\\
	\end{matrix}
	\right].
\end{equation}
Obviously, its eigenvalues are $\lambda_1=1-k\sqrt{2\,c}$ and $\lambda_2=0$. It is evident that $E_{GB}^1$ is locally stable if and only if $k\sqrt{c}<\sqrt{2}$. We summarize the above results in the following proposition.

\begin{proposition}
	The $T_{GB}$ model described by \eqref{eq:gb-map} has a unique  equilibrium 
	$$\left(\frac{1}{\sqrt{8c}}, \frac{1}{\sqrt{8c}}\right),$$ 
	which is locally stable provided that
\begin{equation}\label{eq:gb-stable-cd}
	k\sqrt{c}<\sqrt{2}.
\end{equation}
\end{proposition}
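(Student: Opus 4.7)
The plan is to prove the proposition in two stages: first, locate the positive equilibrium and verify its uniqueness by solving the fixed-point system associated with $T_{GB}$; second, linearize $T_{GB}$ at that equilibrium and compare the spectrum of the Jacobian against the unit-disk criterion for local asymptotic stability of discrete dynamical systems. For the equilibrium stage, I would set $q_1(t+1) = q_1(t) = q_1^{*}$ and $q_2(t+1) = q_2(t) = q_2^{*}$ in \eqref{eq:gb-map} and use the defining relation \eqref{eq:gb-first-order} for $R_2$. After dividing the first component by $kq_1^{*} > 0$, this produces the symmetric pair $q_2^{*} = 2cq_1^{*}(q_1^{*}+q_2^{*})^2$ and $q_1^{*} = 2cq_2^{*}(q_1^{*}+q_2^{*})^2$. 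Subtracting the two equations forces $q_1^{*} = q_2^{*}$, and back-substitution yields $8c(q_1^{*})^3 = q_1^{*}$, whose unique positive root is $q_1^{*} = q_2^{*} = 1/\sqrt{8c}$; this simultaneously gives existence and uniqueness.

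For the stability stage, three of the four Jacobian entries come from direct differentiation of the first component of $T_{GB}$ and from the observation that $R_2$ depends only on $q_1(t)$, so that $\partial q_2(t+1)/\partial q_2(t) = 0$. The remaining entry $\partial q_2(t+1)/\partial q_1(t)$ is where the main technical work lies: since $R_2$ has no workable closed form, I would obtain it by implicitly differentiating \eqref{eq:gb-first-order}, namely by differentiating $q_1 - 2cR_2(q_1)\bigl(q_1+R_2(q_1)\bigr)^2 = 0$ with respect to $q_1$ and solving for $R_2'(q_1)$. This implicit computation is the principal obstacle; after it is in hand, everything else is routine differentiation.

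Evaluating $J_{GB}$ at $E_{GB}^1$, the symmetry $q_1^{*} = q_2^{*} = 1/\sqrt{8c}$ makes the two entries containing the factor $q_1^{*} - q_2^{*}$ vanish, and a direct substitution in the implicit-differentiation formula shows that the lower-left entry also collapses to zero because its numerator becomes $4c(q_1^{*})^2 + 4c(q_2^{*})^2 - 1 = 8c \cdot (1/\sqrt{8c})^2 - 1 = 0$. Hence $J_{GB}(E_{GB}^1)$ is the diagonal matrix exhibited in the text, with eigenvalues $\lambda_1 = 1 - k\sqrt{2c}$ and $\lambda_2 = 0$. Since $|\lambda_2| = 0 < 1$ automatically, the standard unit-disk criterion reduces to $|1 - k\sqrt{2c}| < 1$, which under $k, c > 0$ is equivalent to $0 < k\sqrt{2c} < 2$, i.e., $k\sqrt{c} < \sqrt{2}$, exactly \eqref{eq:gb-stable-cd}.
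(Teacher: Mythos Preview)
Your proposal is correct and follows essentially the same route as the paper: solve the fixed-point system to find the unique positive equilibrium $\bigl(1/\sqrt{8c},\,1/\sqrt{8c}\bigr)$, compute the Jacobian (using implicit differentiation of \eqref{eq:gb-first-order} for the $R_2'$ entry), observe that at the symmetric equilibrium the off-diagonal entries vanish so that $J_{GB}(E_{GB}^1)=\operatorname{diag}(1-k\sqrt{2c},\,0)$, and read off the stability condition $k\sqrt{c}<\sqrt{2}$ from $|\lambda_1|<1$. One small wording slip: only the $(1,2)$ entry vanishes \emph{entirely} from the factor $q_1^{*}-q_2^{*}$; in the $(1,1)$ entry that factor kills only the middle term, leaving $1-4ckq_1^{*}=1-k\sqrt{2c}$, which is of course exactly what you need.
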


\section{Game of Three Firms}

%\subsection{Model GBA}

In this section, we introduce a new boundedly rational player and add it to the model of the previous section. This player is assumed to take an \emph{adaptive mechanism}, which means that at each period $t+1$ it decides the quantity of production $q_3(t+1)$ according to the previous output $q_3(t)$ as well as its expectations of the other two competitors. It is also supposed that this player naively expects that at period $t+1$ firm 1 and 2 would produce the same quantity as at period $t$. Therefore, the third firm could calculate the best response $R_3(q_1(t),q_2(t))$ to maximize its expected profit. Similar as \eqref{eq:gb-first-order}, $R_3(q_1(t),q_2(t))$ is the solution for $q_3'(t+1)$ of the following equation.
\begin{equation}
	q_1(t)+q_2(t)-2\,cq_3'(t+1)(q_1(t)+q_2(t)+q_3'(t+1))^2=0.
\end{equation}
The adaptive decision mechanism for firm 3 is that it choose the output $q_3(t+1)$ proportionally to be
$$q_3(t+1)=(1-l)q_3(t)+lR_3(q_1(t),q_2(t)),$$
where $l\in(0,1]$ is a parameter controlling the proportion. 

Hence, the triopoly can be described by

\begin{equation}\label{eq:gba-map}
	T_{GBA}(q_1,q_2,q_3): 
	\left\{\begin{split}
		&q_1(t+1)=q_1(t) + k q_1(t)\left[\frac{q_2(t)+q_3(t)}{(q_1(t)+q_2(t)+q_3(t))^2}-2\,cq_1(t)\right],\\
		&q_2(t+1)=R_2(q_1(t),q_3(t)),\\
		&q_3(t+1)=(1-l)q_3+l R_3(q_1(t),q_2(t)).
		\end{split}
	\right.
\end{equation}
Similar to Section \ref{sec:duopoly}, the equilibria satisfy that
\begin{equation}
	\left\{\begin{split}
		&k q_1\left(\frac{q_2+q_3}{(q_1+q_2+q_3)^2}-2\,cq_1\right)=0,\\
		&q_1+q_3-2\,cq_2(q_1+q_2+q_3)^2=0,\\
		&q_1+q_2-2\,cq_3(q_1+q_2+q_3)^2=0,
	\end{split}
	\right.
\end{equation}
which could be solved by
\begin{equation}
\begin{split}
E_{GBA}^1=&~\left(0,\frac{1}{\sqrt{8c}},\frac{1}{\sqrt{8c}}\right),\\
E_{GBA}^2=&~\left(\frac{1}{\sqrt{9c}},\frac{1}{\sqrt{9c}},\frac{1}{\sqrt{9c}}\right).	
\end{split}
\end{equation}

For an equilibrium $(q_1^*,q_2^*,q_3^*)$, the Jacobian matrix of $T_{GBA}$ takes the form
\begin{equation}
	J_{GBA}(q_1^*,q_2^*,q_3^*)=\left[\begin{matrix}
		\frac{\partial q_1(t+1)}{\partial q_1(t)}\big|_{(q_1^*,q_2^*,q_3^*)} & \frac{\partial q_1(t+1)}{\partial q_2(t)}\big|_{(q_1^*,q_2^*,q_3^*)} & \frac{\partial q_1(t+1)}{\partial q_3(t)}\big|_{(q_1^*,q_2^*,q_3^*)}\\
		\frac{\partial q_2(t+1)}{\partial q_1(t)}\big|_{(q_1^*,q_2^*,q_3^*)} & \frac{\partial q_2(t+1)}{\partial q_2(t)}\big|_{(q_1^*,q_2^*,q_3^*)} & \frac{\partial q_2(t+1)}{\partial q_3(t)}\big|_{(q_1^*,q_2^*,q_3^*)}\\
		\frac{\partial q_3(t+1)}{\partial q_1(t)}\big|_{(q_1^*,q_2^*,q_3^*)} & \frac{\partial q_3(t+1)}{\partial q_2(t)}\big|_{(q_1^*,q_2^*,q_3^*)} & \frac{\partial q_3(t+1)}{\partial q_3(t)}\big|_{(q_1^*,q_2^*,q_3^*)}\\
	\end{matrix}
	\right].
\end{equation}
The first and the second rows of the matrix might be similarly computed as Section \ref{sec:duopoly}. For the third row, we have
\begin{equation}
\begin{split}
	\frac{\partial q_3(t+1)}{\partial q_1(t)}=&~l\frac{\partial R_3(q_1(t),q_2(t))}{\partial q_1(t)},\\
	\frac{\partial q_3(t+1)}{\partial q_2(t)}=&~l\frac{\partial R_3(q_1(t),q_2(t))}{\partial q_2(t)},\\
	\frac{\partial q_3(t+1)}{\partial q_3(t)}=&~1-l,\\
\end{split}
\end{equation}
where ${\partial R_3(q_1(t),q_2(t))}/{\partial q_1(t)}$ and ${\partial R_3(q_1(t),q_2(t))}/{\partial q_2(t)}$ can be acquired using the method of implicit differentiation. From an economic point of view, we only consider the positive equilibrium $E^2_{GBA}$, where the Jacobian matrix would be
\begin{equation}
	J_{GBA}(E^2_{GBA})=\left[\begin{matrix}
		1-{10\,k\sqrt{c}}/{9} & -k\sqrt{c}/9 & -k\sqrt{c}/9\\
		-{1}/{10} & 0 & -{1}/{10}\\
		-{l}/{10} & -{l}/{10} & 1-l
	\end{matrix}
	\right].
\end{equation}

Let $A$ be the characteristic polynomial of a Jacobian matrix  $J$. The eigenvalues of $J$ are simply the roots of the polynomial $A$ for $\lambda$. So the problem of stability analysis can be reduced to that of determining whether all the roots of $A$ lie in the open unit disk $|\lambda|<1$.  To the best of our knowledge, in addition to the Routh-Hurwitz criterion \cite{Oldenbourg1948T} generalized from the corresponding criterion for continuous systems, there are two other criteria, the Schur-Cohn criterion \cite[pp.\,246--248]{Elaydi2005U} and the Jury criterion \cite{Jury1976I}, available for discrete dynamical systems. In what follows, we provide a short review of the Schur-Cohn criterion.

\begin{proposition}[Schur-Cohn Criterion]\label{prop:Schur-Cohn}
For a $n$-dimensional discrete dynamic system, assume that the characteristic polynomial of its Jacobian matrix is
\begin{equation*}
A= \lambda^n + a_{n-1}\lambda^{n - 1} + \cdots + a_0.
\end{equation*}
Consider the
sequence of determinants $D^\pm_1$, $D^\pm_2$, $\ldots$, $D^\pm_n$,
where
\begin{equation*}
\begin{split}
  D^{\pm}_i =&\left| \left(
\begin{array}{ccccc}
1&a_{n-1}&a_{n-2}&\cdots&a_{n-i+1}\\
0&1&a_{n-1}&\cdots&a_{n-i+2}\\
0&0&1&\cdots&a_{n-i+3}\\
\vdots&\vdots&\vdots&\ddots&\vdots\\
0&0&0&\cdots&1\\
\end{array}
\right)\pm\left(
\begin{array}{ccccc}
a_{i-1}&a_{i-2}&\cdots&a_{1}&a_0\\
a_{i-2}&a_{i-3}&\cdots&a_{0}&0\\
\vdots&\vdots&\ddots&\vdots&\vdots\\
a_{1}&a_0&\cdots&0&0\\
a_0&0&\cdots&0&0\\
\end{array}
\right)\right|.
\end{split}
\end{equation*}
The characteristic polynomial $A$ has all its roots inside the unit
open disk if and only if
\smallskip
\begin{enumerate}
\item $A(1)>0$ and $(-1)^nA(-1)>0$,

\item  $D^\pm_1>0, D^\pm_3>0, \ldots, D^\pm_{n-3}>0, D^\pm_{n-1}>0$
(when $n$ is even), or\\[2pt]
\smallskip $D^\pm_2>0, D^\pm_4>0, \ldots, D^\pm_{n-3}>0,
D^\pm_{n-1}>0$ (when $n$ is odd).
\end{enumerate}
\end{proposition}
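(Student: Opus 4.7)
The plan is to derive the Schur-Cohn criterion by combining the classical Schur deflation (which reduces a degree-$n$ stability problem to one of degree $n-1$) with a determinantal identification of the resulting scalar conditions. I work with real coefficients throughout, as in the application at hand.

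First, I would establish the Schur reduction lemma. For $A(\lambda) = \lambda^n + a_{n-1}\lambda^{n-1} + \cdots + a_0$ define the reciprocal polynomial $A^{*}(\lambda) = a_0\lambda^n + a_1\lambda^{n-1} + \cdots + a_{n-1}\lambda + 1$. Since $|A^{*}(\lambda)| = |A(\lambda)|$ on the circle $|\lambda|=1$, Rouch\'e's theorem applied to $A(\lambda)$ and $a_0 A^{*}(\lambda)$ shows that $A$ has all its roots in the open unit disk if and only if $|a_0| < 1$ and the polynomial $A_1(\lambda) := [A(\lambda) - a_0 A^{*}(\lambda)]/\lambda$, which has degree $n-1$, has all its roots in the open unit disk. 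Iterating this step produces a descending chain $A = A_0, A_1, \ldots, A_{n-1}$ of polynomials, and the associated scalar stability tests translate into the positivity of $n$ polynomial quantities $\Delta_1, \ldots, \Delta_n$ in the original $a_j$.

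Second, I would identify the $\Delta_i$ with the determinants $D_i^{\pm}$ from the statement. The linear map taking the coefficient vector of $A$ to that of $A_1$ is the difference of an upper-triangular Toeplitz block (top row $1, a_{n-1}, a_{n-2}, \ldots$) and an anti-triangular Hankel block (top row $a_0, a_1, \ldots$); these are precisely the two blocks appearing in the definition of $D_i^{\pm}$. An inductive Laplace expansion then shows that the critical coefficient produced after $i$ Schur steps equals, up to a positive factor depending only on earlier iterates, the determinant of the combined $i \times i$ block with the minus sign; a parallel computation starting from $A^{*}$ yields the plus sign. The endpoint conditions $A(1) > 0$ and $(-1)^n A(-1) > 0$ are then necessary because any real monic polynomial with zeros in the open unit disk factors into real linear factors $(\lambda - r_j)$ with $|r_j|<1$ and real quadratic factors $(\lambda - z_j)(\lambda - \bar z_j)$ with $|z_j|<1$, each of which has the required sign at $\pm 1$; they also rule out roots at $\lambda = \pm 1$, which the interior determinantal conditions cannot detect by themselves.

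Finally, item (b) --- the fact that only alternate indices among the $D_i^{\pm}$ need to be tested --- I would obtain from a redundancy argument in the spirit of Jury: once the endpoint conditions hold and either all even-indexed or all odd-indexed determinants are positive, the remaining ones are forced, via a parity/sign-flip argument comparing $D_i^+$ and $D_i^-$ pairwise along the Schur recursion. The main obstacle is the determinantal identification in the second step: translating iterated Schur deflation into precisely the symmetric-plus-antisymmetric block determinants displayed in the proposition is the nontrivial combinatorial core of the criterion, and getting all signs and index shifts to match requires careful case analysis on the parity of $n$ and $i$.
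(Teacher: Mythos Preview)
The paper does not prove this proposition at all: it is stated as a known result, labelled ``Schur--Cohn Criterion,'' and attributed to \cite[pp.~246--248]{Elaydi2005U} with no accompanying argument. So there is no ``paper's own proof'' to compare your attempt against.

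Your sketch follows the standard Schur-deflation route and is a reasonable outline. A few comments on the plan itself. The Rouch\'e step in your first paragraph is not quite right as written: you need $|a_0|<1$ to guarantee $|a_0 A^*(\lambda)| < |A(\lambda)|$ on $|\lambda|=1$ only when $A$ has no zeros on the unit circle, so the ``if and only if'' requires a separate argument to exclude boundary zeros (this is where the endpoint conditions $A(\pm 1)\neq 0$ eventually enter, together with a continuity/deformation argument for complex-conjugate pairs on the circle). The identification of the iterated Schur quantities with the specific $D_i^{\pm}$ determinants is, as you acknowledge, the real work; the recursion produces rational functions of the $a_j$, and matching them to these Toeplitz-plus-Hankel minors is usually done via Bezoutian or inner-product representations rather than a bare Laplace expansion. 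Finally, the redundancy argument reducing to alternate indices is delicate and is where many textbook accounts simply cite Jury or Marden; your description of it as a ``parity/sign-flip argument'' is not yet a proof. None of this is wrong in spirit, but the proposal as it stands is a roadmap rather than a proof, and the paper itself is content to import the result from the literature.
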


\begin{corollary}
	Consider a $3$-dimensional discrete dynamic system with the characteristic polynomial of its Jacobian matrix of the form 
	$$A=\lambda^3+a_2\lambda^2+a_1\lambda+a_0.$$
	An equilibrium $E$ is locally stable if and only if the following inequalities are satisfied at $E$.
	\begin{equation}\label{eq:cd4-stable}
		\left\{\begin{split}
			&1+a_2+a_1+a_0>0,\\
			&1-a_2+a_1-a_0>0,\\
			&-a_0^2-a_0a_2+a_1+1>0,\\
			&-a_0^2+a_0a_2-a_1+1>0.
		\end{split}\right.
	\end{equation}
\end{corollary}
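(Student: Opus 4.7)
The plan is to derive this corollary as a direct specialization of the Schur-Cohn criterion (Proposition \ref{prop:Schur-Cohn}) to the case $n=3$. Since $n=3$ is odd, condition (2) of the proposition requires $D^\pm_{n-1} = D^\pm_2 > 0$ (this is the only determinant to check, because the list $D^\pm_2, D^\pm_4,\ldots, D^\pm_{n-3}, D^\pm_{n-1}$ collapses to just $D^\pm_2$ when $n=3$). Thus the verification reduces to two tasks: unfolding condition (1), and computing $D^\pm_2$ explicitly.

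First I would handle condition (1). Evaluating $A(1) = 1 + a_2 + a_1 + a_0$ and $(-1)^3 A(-1) = -(-1 + a_2 - a_1 + a_0) = 1 - a_2 + a_1 - a_0$, the requirements $A(1) > 0$ and $(-1)^n A(-1) > 0$ immediately produce the first and second inequalities of \eqref{eq:cd4-stable}.

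Next I would compute $D_2^\pm$. With $n=3$ and $i=2$, the upper-triangular block and the antidiagonal block in the definition of $D^\pm_i$ become the $2\times 2$ matrices
\begin{equation*}
\begin{pmatrix} 1 & a_2 \\ 0 & 1 \end{pmatrix} \qquad \text{and} \qquad \begin{pmatrix} a_1 & a_0 \\ a_0 & 0 \end{pmatrix},
\end{equation*}
respectively. Then $D_2^+ = \det\!\bigl(\begin{smallmatrix} 1+a_1 & a_2+a_0 \\ a_0 & 1 \end{smallmatrix}\bigr) = 1 + a_1 - a_0 a_2 - a_0^2$ and $D_2^- = \det\!\bigl(\begin{smallmatrix} 1-a_1 & a_2-a_0 \\ -a_0 & 1 \end{smallmatrix}\bigr) = 1 - a_1 + a_0 a_2 - a_0^2$. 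Imposing $D_2^+ > 0$ and $D_2^- > 0$ gives exactly the third and fourth inequalities of \eqref{eq:cd4-stable}, completing the proof.

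There is no real obstacle here; the corollary is a mechanical specialization and the only thing one must be careful about is correctly reading off the indexing conventions of the two matrices in Proposition \ref{prop:Schur-Cohn} (especially the anti-Hankel structure of the second block, where the entries run as $a_1, a_0$ on the first row and $a_0, 0$ on the second when $i=2$), and the fact that for odd $n$ the listed determinants start at $D_2^\pm$ rather than $D_1^\pm$.
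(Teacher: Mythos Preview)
Your proposal is correct and follows exactly the route the paper intends: the corollary is stated immediately after the Schur-Cohn criterion without any explicit proof, so it is meant to be read as the direct $n=3$ specialization you carry out. Your identification of condition (1) with the first two inequalities and of $D_2^{\pm}>0$ with the last two is accurate, and your care with the indexing of the anti-Hankel block is well placed.
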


For the $3$-dimensional discrete dynamic system \eqref{eq:gba-map}, it is easy to verify that at the unique positive equilibrium $E_{GBA}^2$ the local stability condition \eqref{eq:cd4-stable}  could be reformulated to 
\begin{equation}\label{eq:inequality-4}
	CD_{GBA}^1>0,~CD_{GBA}^2>0,~CD_{GBA}^3<0,~CD_{GBA}^4<0,
\end{equation}
where
\begin{equation}
\begin{split}
CD_{GBA}^1=&~
kl\sqrt{c},\\	
CD_{GBA}^2=&~
504\,kl\sqrt{c}-1010\,k\sqrt{c}-909\,l+1800,\\
CD_{GBA}^3=&~
324\,ck^2l^2-18360\,ck^2l+10100\,ck^2-16524\,kl^2\sqrt{c}-840420\,kl\sqrt{c}\\
&+8181\,l^2+891000\,k\sqrt{c}+801900\,l-1620000,\\
CD_{GBA}^4=&~
36\,ck^2l^2+1960\,ck^2l+1764\,kl^2\sqrt{c}-1100\,ck^2+93420\,kl\sqrt{c}\\
&-99000\,k\sqrt{c}-891\,l^2-89100\,l.
\end{split}
\end{equation}

It is obvious that $CD_{GBA}^1>0$ could be ignored as it is always true for all parameter values such that $k>0$, $c>0$ and $1\geq l>0$. A further question is whether the other three inequalities could be simplified. To answer this question, we might investigate the inclusion relations of these inequalities. It  worth noticing that the surfaces $CD_{GBA}^2=0$, $CD_{GBA}^3=0$ and $CD_{GBA}^4=0$ divide the parameter space $\{(k,l,c)\,|\,k>0,1\geq l>0,c>0\}$ of our concern into a number of connected regions. Moreover, in a given region, the signs of $CD_{GBA}^i$ ($i=1,2,3,4$) would be invariant. This means that in each of these regions we could identify whether the inequalities in \eqref{eq:inequality-4} are satisfied by checking them at a single sample point. For simple cases, the selection of sample points might be done by hand. Generally, however, the selection could be automated by using, e.g., the partial cylindrical algebraic decomposition (PCAD) method \cite{Collins1991P}. 

\begin{table}[htbp]
	\centering 
	\caption{Stability Condition of $T_{GBA}$ at Selected Sample Points}
	\label{tab:gba-sample} 
	 
	\begin{tabular}{|l|c|c|c|c|}  
		\hline  % 表格的横线
		 sample point of $(k,l,c)$ & $CD_{GBA}^1>0$ & $CD_{GBA}^2>0$ & $CD_{GBA}^3<0$ & $CD_{GBA}^4<0$\\ 
		\hline
		(455/256, 71/256, 1/4) & true & true & true & true \\
		\hline
		(31/8, 71/256, 1/4) & true & false & true & true \\
		\hline
		(601/128, 71/256, 1/4)&true& false& false& true\\
		\hline
		(453/256, 183/256, 1/4)&true & true& true& true\\
		\hline
		(1439/256, 183/256, 1/4) &true& false& true& true\\
		\hline
		(1577/16, 183/256, 1/4)&true& false& false& true\\
		\hline
		(49855/256, 183/256, 1/4)&true& false& true& true \\
		\hline
		(25673/128, 183/256, 1/4)& true& false& true& false\\
		\hline
		(451/256, 15/16, 1/4)&true& true& true& true\\
		\hline
		(5237/256, 15/16, 1/4)&true& false& true& true\\
		\hline
		(2425/64, 15/16, 1/4)&true& false& true& false\\
		\hline
	\end{tabular}
\end{table}

In Table \ref{tab:gba-sample}, we list all the selected sample points such that there is at least one point in each region. The four inequalities in \eqref{eq:inequality-4} are verified at these sample points one by one, which are also given in Table \ref{tab:gba-sample}. It is observed that at the sample points where  $CD_{GBA}^2>0$ is true, the other three inequalities would also be true. Hence, if $CD_{GBA}^2>0$ is satisfied, then all the four inequalities in \eqref{eq:inequality-4} would be satisfied definitely. In other words, only $CD_{GBA}^2>0$ is needed herein for the detection of the local stability. Furthermore, $CD_{GBA}^2>0$ is equivalent to
\begin{equation*}
	k\sqrt{c}<\frac{9(101\,l-200)}{2(252\,l-505)}.
\end{equation*}
Therefore, we summarize the obtained results in the following proposition.

\begin{proposition}
The $T_{GBA}$ model described by \eqref{eq:gba-map} has a unique positive equilibrium 
$$\left(\frac{1}{\sqrt{9c}}, \frac{1}{\sqrt{9c}}, \frac{1}{\sqrt{9c}}\right),$$
 which is locally stable provided that
\begin{equation}
	k\sqrt{c}<\frac{9(101\,l-200)}{2(252\,l-505)}.
\end{equation}
%\begin{equation}
%ck^2<\frac{826281\,l^2-3272400\,l+3240000}{254016\,l^2-1018080\,l+1020100}
%\end{equation}
\end{proposition}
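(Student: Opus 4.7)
The plan is to follow a five-stage programme that adapts the method of Section~\ref{sec:duopoly}: (i) isolate the unique positive equilibrium from the fixed-point system; (ii) assemble the Jacobian at that equilibrium, using implicit differentiation to handle the best-response functions $R_2$ and $R_3$; (iii) invoke the Schur--Cohn corollary stated above to translate local stability into four polynomial inequalities in $(k,l,c)$; (iv) reduce these four inequalities to a single one; (v) solve that remaining inequality for $k\sqrt{c}$ to obtain the displayed bound.

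Stages (i)--(iii) I expect to be direct. For (i), subtracting the symmetric second and third equilibrium equations forces $q_2^*=q_3^*$, after which the first equation together with the second pins down $q_1^*=q_2^*=q_3^*=1/\sqrt{9c}$; the non-positive solution $E_{GBA}^1$ is discarded on economic grounds. For (ii), the first row of the Jacobian follows the pattern already used in Section~\ref{sec:duopoly}, the second row is obtained by implicit differentiation exactly as in that section (now with $q_1+q_3$ playing the role of $q_1$), and the third row is immediate once $\partial R_3/\partial q_1$ and $\partial R_3/\partial q_2$ are found by implicit differentiation of the defining equation of $R_3$. Substituting the resulting characteristic polynomial $\lambda^3+a_2\lambda^2+a_1\lambda+a_0$ into \eqref{eq:cd4-stable} and clearing positive or negative denominators produces the four quantities $CD_{GBA}^1,\ldots,CD_{GBA}^4$ displayed in the text; the flipped inequality signs on $CD_{GBA}^3$ and $CD_{GBA}^4$ correspond to clearing a negative common factor.

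The main obstacle is stage (iv): establishing that on the parameter region $\Omega=\{(k,l,c):k>0,\ c>0,\ 0<l\le 1\}$ the single inequality $CD_{GBA}^2>0$ already implies the remaining three. My plan is a partial cylindrical algebraic decomposition argument. The three algebraic surfaces $CD_{GBA}^i=0$ for $i=2,3,4$ partition $\Omega$ into finitely many connected open cells on each of which every $CD_{GBA}^i$ has constant sign. Running a PCAD over these polynomials and extracting at least one sample point per cell (the table in the text is an instance of such sampling for the slice $c=1/4$) reduces the desired implication to a finite rational-arithmetic check: at every sample point one verifies that $CD_{GBA}^2>0$ can only occur together with $CD_{GBA}^1>0$, $CD_{GBA}^3<0$ and $CD_{GBA}^4<0$. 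Stage (v) then closes the argument: since $252l-505<0$ throughout $\Omega$, solving the linear inequality $CD_{GBA}^2>0$ for $k\sqrt{c}$ directly yields the claimed bound $k\sqrt{c}<9(101l-200)/(2(252l-505))$. The chief difficulty is the bookkeeping in stage (iv); once the cell structure is enumerated, each cell-wise verification is a routine evaluation of rational expressions.
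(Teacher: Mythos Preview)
Your proposal is correct and follows essentially the same route as the paper: compute the symmetric equilibrium, evaluate the Jacobian $J_{GBA}(E_{GBA}^2)$ via implicit differentiation, apply the three-dimensional Schur--Cohn corollary to obtain the four conditions $CD_{GBA}^i$, and then use a PCAD-based sample-point check to reduce to $CD_{GBA}^2>0$, which rearranges to the stated bound. The only point worth making explicit in your stage~(iv) is that every $CD_{GBA}^i$ depends on the parameters only through $(k\sqrt{c},l)$, so the single slice $c=1/4$ in Table~\ref{tab:gba-sample} already exhausts the cell structure of the full region $\Omega$; the paper leaves this implicit as well.
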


Furthermore, we have the following result.

\begin{proposition}
	The stability region of the $T_{GBA}$ model is strictly larger than that of $T_{GB}$.
\end{proposition}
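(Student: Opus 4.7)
The plan is to compare the two stability conditions through the common quantity $k\sqrt{c}$. From the two preceding propositions, the $T_{GB}$ positive equilibrium is locally stable iff $k\sqrt{c}<\sqrt{2}$, and the $T_{GBA}$ positive equilibrium is locally stable iff
\[
k\sqrt{c}<f(l):=\frac{9(101\,l-200)}{2(252\,l-505)}.
\]
For $l\in(0,1]$ the numerator $101\,l-200$ and the denominator $252\,l-505$ are both negative, so equivalently $f(l)=\dfrac{9(200-101\,l)}{2(505-252\,l)}$, and the denominator satisfies $2(505-252\,l)\ge 506>0$ on the whole admissible range. I will therefore reduce the proposition to the single claim that $f(l)>\sqrt{2}$ for every $l\in(0,1]$; this immediately gives containment of the $T_{GB}$ stability slab $\{k\sqrt{c}<\sqrt{2}\}$ inside $\{k\sqrt{c}<f(l)\}$ (for every admissible $l$), and the strict inequality produces a non-empty band $\{\sqrt{2}\le k\sqrt{c}<f(l)\}$ of parameters that are stable for $T_{GBA}$ but not for $T_{GB}$, witnessing strictness.

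After clearing the positive denominator, $f(l)>\sqrt{2}$ is equivalent to the linear condition
\[
g(l):=9(200-101\,l)-2\sqrt{2}\,(505-252\,l)=(1800-1010\sqrt{2})+(504\sqrt{2}-909)\,l>0.
\]
Because $g$ is affine in $l$, positivity on $[0,1]$ follows from positivity at the two endpoints. At $l=0$ this reduces to $1800>1010\sqrt{2}$, equivalent after squaring to the integer inequality $1800^{2}>2\cdot 1010^{2}$, i.e.\ $3{,}240{,}000>2{,}040{,}200$; at $l=1$ it reduces to $891>506\sqrt{2}$, equivalent to $891^{2}>2\cdot 506^{2}$, i.e.\ $793{,}881>512{,}072$. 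Both are elementary.

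There is no essential obstacle beyond careful sign bookkeeping: one must recognise that the given formula for $f(l)$ is positive on $(0,1]$ despite being written as a ratio of two negative quantities, confirm the positivity of the denominator so that cross multiplication preserves the inequality, and then the entire comparison collapses to a linear-in-$l$ inequality whose affinity reduces it to the two one-line endpoint checks above.
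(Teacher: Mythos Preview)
Your proof is correct and follows essentially the same route as the paper: reduce the comparison to showing $\dfrac{9(101\,l-200)}{2(252\,l-505)}>\sqrt{2}$, clear the (sign-controlled) denominator to obtain a linear inequality in $l$, and verify it at the endpoints $l=0$ and $l=1$. Your write-up is slightly more explicit about the sign bookkeeping and the endpoint verifications (via squaring), but the argument is the same.
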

\begin{proof}
It suffices to prove that 
	$$\frac{9(101\,l-200)}{2(252\,l-505)}>\sqrt{2},$$
which is equivalent to 
\begin{equation*}\label{eq:gba-gb}
	9(101\,l-200) < 2\sqrt{2}(252\,l-505)
\end{equation*}
since $252\,l-505<0$. It is easy to see that the above inequality can be reformulated to
$$(909-504\sqrt{2})l<(1800-1010\sqrt{2}),$$
which is true by checking at $l=0$ and $l=1$. This completes the proof.
\end{proof}

%\begin{proof}
%It would suffice to prove that 
%	$$\frac{826281\,l^2-3272400\,l+3240000}{254016\,l^2-1018080\,l+1020100}
%	=2+\frac{318249 \,l^2  - 1236240\, l + 1199800}{254016\,l^2-1018080\,l+1020100}>2.$$
%It is noted that if $0<l\leq 1$, then
%$$318249 \,l^2  - 1236240\, l + 1199800>0$$ 
%and 
%$$254016\,l^2-1018080\,l+1020100>0$$ 
%for each of the left parts has two real zeros greater than $1$. This completes the proof.
%\end{proof}

\section{Game of Four Firms}

%\subsection{Model GBAL}

In this section, we introduce an additional player. The fourth firm adopts the so-called \emph{local monopolistic approximation} (LMA) mechanism \cite{Tuinstra2004A}, which is also a boundedly rational adjustment process. In this process, the player just has limited knowledge of the demand function. To be exact, the firm can observe the current market price $p(t)$ and the corresponding total supply $Q(t)$ and is able to correctly estimate the slope $p'(Q(t))$ of the price function around the point $(p(t),Q(t))$. Then, the firm uses such information to conjecture the demand function and expect the price at period $t+1$ to be
$$p^e(t+1)=p(Q(t))+p'(Q(t))(Q^e(t+1)-Q(t)),$$
where $Q^e(t+1)$ represents the expected aggregate production at period $t+1$. Moreover, firm $4$ is also assumed to use the naive expectations of its rivals, i.e., 
$$Q^e(t+1)=q_1(t)+q_2(t)+q_3(t)+q_4(t+1).$$ 
Thus, we have that 
$$p^e(t+1)=\frac{1}{Q(t)}-\frac{1}{Q^2(t)}(q_4(t+1)-q_4(t)).$$
The expected profit of the fourth firm is
$$\Pi^e_4(t+1)=p^e(t+1)q_4(t+1)-cq_4^2(t+1).$$
To maximize the expected profit, firm $4$ chooses its output at period $t+1$ to be the solution of the first order condition
$$q_4(t+1)=\frac{2\,q_4(t)+q_1(t)+q_2(t)+q_3(t)}{2(1+c(q_1(t)+q_2(t)+q_3(t)+q_4(t))^2)}.$$

Therefore, the new model can be described by the following $4$-dimensional discrete dynamic system.
\begin{equation}\label{eq:gbal-map}
\begin{split}
	&T_{GBAL}(q_1,q_2,q_3,q_4): \\
	&\left\{\begin{split}
		&q_1(t+1)=q_1(t) + k q_1(t)\left[\frac{q_2(t)+q_3(t)+q_4(t)}{(q_1(t)+q_2(t)+q_3(t)+q_4(t))^2}-2\,cq_1(t)\right],\\
		&q_2(t+1)=R_2(q_1(t),q_3(t),q_4(t)),\\
		&q_3(t+1)=(1-l)q_3+l R_3(q_1(t),q_2(t),q_4(t)),\\
		&q_4(t+1)=\frac{2\,q_4(t)+q_1(t)+q_2(t)+q_3(t)}{2(1+c(q_1(t)+q_2(t)+q_3(t)+q_4(t))^2)}.
		\end{split}
	\right.
\end{split}
\end{equation}
Similarly, we know that the equilibria are described by
\begin{equation}
	\left\{\begin{split}
		&k q_1\left(\frac{q_2+q_3+q_4}{(q_1+q_2+q_3+q_4)^2}-2\,cq_1\right)=0,\\
		&q_1+q_3+q_4-2\,cq_2(q_1+q_2+q_3+q_4)^2=0,\\
		&q_1+q_2+q_4-2\,cq_3(q_1+q_2+q_3+q_4)^2=0,\\
		&q_4-\frac{2\,q_4+q_1+q_2+q_3}{2(1+c(q_1+q_2+q_3+q_4)^2)}=0,
	\end{split}
	\right.
\end{equation}
which could be solved by two solutions 
\begin{equation}
\begin{split}
E_{GBAL}^1=&~\left(0,\frac{1}{\sqrt{9c}},\frac{1}{\sqrt{9c}},\frac{1}{\sqrt{9c}}\right),\\	
E_{GBAL}^2=&~\left(\sqrt{\frac{3}{32c}},\sqrt{\frac{3}{32c}},\sqrt{\frac{3}{32c}},\sqrt{\frac{3}{32c}}\right).\\	
\end{split}
\end{equation}
Hence, there exists a unique positive equilibrium $E_{GBAL}^2$, where the Jacobian matrix of $T_{GBAL}$ should be
\begin{equation}
	J_{GBAL}(E^2_{GBAL})=\left[\begin{matrix}
		1-{3\,k\sqrt{6c}}/8 & -k\sqrt{6c}/{24} & -k\sqrt{6c}/{24} & -k\sqrt{6c}/{24}\\
		-{1}/{9} & 0 & -{1}/{9} & -{1}/{9}\\
		-{l}/{9} & -{l}/{9} & 1-l & -{l}/{9}\\
		-{1}/{10} & -{1}/{10} & -{1}/{10} & {1}/{10}\\
	\end{matrix}
	\right].
\end{equation}

By virtue of Proposition \ref{prop:Schur-Cohn}, we have the following corollary.

\begin{corollary}\label{cor:4-dim-stable}
	Consider a $4$-dimensional discrete dynamic system with the characteristic polynomial of its Jacobian matrix of the form 
	$$A=\lambda^4+a_3\lambda^3+a_2\lambda^2+a_1\lambda+a_0.$$
	An equilibrium $E$ is locally stable if and only if the following inequalities are satisfied at $E$.
	\begin{equation}\label{eq:cd6-stable}
		\left\{\begin{split}
			&1+a_3+a_2+a_1+a_0>0,\\
			&1-a_3+a_2-a_1+a_0>0,\\
			&-a_0^3-a_0^2a_2+a_0a_1a_3+a_0a_3^2-a_0^2-a_1^2-a_1a_3+a_0+a_2+1>0,\\
			&a_0^3-a_0^2a_2+a_0a_1a_3-a_0a_3^2-a_0^2+2\,a_0a_2-a_1^2+a_1a_3-a_0-a_2+1>0,\\
			&1+a_0>0,\\
			&1-a_0>0.
		\end{split}\right.
	\end{equation}
\end{corollary}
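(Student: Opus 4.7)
The corollary is a direct specialization of Proposition \ref{prop:Schur-Cohn} (the Schur-Cohn criterion) to the case $n=4$, so the plan is simply to instantiate that general result and compute the relevant determinants. Since $n=4$ is even, condition (2) of Proposition \ref{prop:Schur-Cohn} asks for $D^\pm_1>0$ and $D^\pm_3>0$, together with condition (1), namely $A(1)>0$ and $A(-1)>0$. My plan is to translate each of these six inequalities one by one into the explicit form stated in \eqref{eq:cd6-stable}.

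First I would dispose of condition (1): evaluating $A(\lambda)=\lambda^4+a_3\lambda^3+a_2\lambda^2+a_1\lambda+a_0$ at $\lambda=\pm1$ immediately yields the first two inequalities of \eqref{eq:cd6-stable}. Next I would compute $D^\pm_1$, which by definition equals $|1\pm a_0|$; requiring these determinants to be positive gives the last two inequalities $1\pm a_0>0$.

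The main work is in the remaining pair of inequalities, which come from $D^\pm_3$. Here I would write down the two $3\times 3$ matrices
\begin{equation*}
\begin{pmatrix}1&a_3&a_2\\0&1&a_3\\0&0&1\end{pmatrix}\pm\begin{pmatrix}a_2&a_1&a_0\\a_1&a_0&0\\a_0&0&0\end{pmatrix},
\end{equation*}
expand each determinant (say, along the bottom row, which has only two nonzero entries), and collect terms. For the $+$ sign this produces
\begin{equation*}
-a_0^3-a_0^2a_2+a_0a_1a_3+a_0a_3^2-a_0^2-a_1^2-a_1a_3+a_0+a_2+1,
\end{equation*}
matching the third inequality of \eqref{eq:cd6-stable}; for the $-$ sign the same expansion, carefully tracking the sign flips coming from replacing $a_j$ by $-a_j$ in the off-diagonal block, yields
\begin{equation*}
a_0^3-a_0^2a_2+a_0a_1a_3-a_0a_3^2-a_0^2+2a_0a_2-a_1^2+a_1a_3-a_0-a_2+1,
\end{equation*}
matching the fourth inequality.

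The proof is then complete by combining Proposition \ref{prop:Schur-Cohn} with these explicit evaluations. No genuine obstacle is expected; the only place where an error is likely is the sign bookkeeping in the expansion of $D^-_3$, since several cancellations occur between the diagonal block and the reflected coefficient block, so I would double-check that expansion term by term against the stated formula.
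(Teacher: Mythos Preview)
Your proposal is correct and follows exactly the approach the paper indicates: the corollary is stated in the paper without any detailed proof, merely as a specialization of Proposition~\ref{prop:Schur-Cohn}, and your computation of $A(\pm1)$, $D^\pm_1$, and $D^\pm_3$ faithfully carries out that specialization. The determinant expansions you outline for $D^\pm_3$ indeed reproduce the third and fourth inequalities of \eqref{eq:cd6-stable}, so nothing further is needed.
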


For the $4$-dimensional discrete dynamic system \eqref{eq:gbal-map}, it is easy to verify that at the unique positive equilibrium $E_{GBAL}^2$ the above condition \eqref{eq:cd6-stable} could be reformulated to 
\begin{equation}\label{eq:inequality-6}
\begin{split}
	CD_{GBAL}^1>0,~CD_{GBAL}^2>0,~CD_{GBAL}^3>0,\\
	CD_{GBAL}^4<0,~CD_{GBAL}^5<0,~CD_{GBAL}^6>0,
\end{split}
\end{equation}
where
\begin{equation}
\begin{split}
CD_{GBAL}^1=&~
kl\sqrt{32c/3},\\	
CD_{GBAL}^2=&~
(512\, k l - 1017\, k )\sqrt{32c/3} - 3616\, l + 7056,\\
CD_{GBAL}^3=&~
(28672\,k^3l^3-1062432\,k^3l^2+9180054\,k^3l-12603681\,k^3)(\sqrt{32c/3})^3\\
&+(-3777536\,k^2l^3+179157888\,k^2l^2-1194862752\,k^2l+945483840\,k^2)(\sqrt{32c/3})^2\\
&+(116054016\,kl^3-4248400896\,kl^2-5573546496\,kl+13237426944\,k)\sqrt{32c/3}\\
&-566525952\,l^3+11952783360\,l^2+47066406912\,l-133145026560,\\
CD_{GBAL}^4=&~
(3616\,k^3l^3-132966\,k^3l^2-512973\,k^3l+1226907\,k^3)(\sqrt{32c/3})^3\\
&+(-472768\,k^2l^3+16419744\,k^2l^2+77813136\,k^2l-83525904\,k^2)(\sqrt{32c/3})^2\\
&+(-6484992\,kl^3+276668928\,kl^2+1145829888\,kl-1868106240\,k)\sqrt{32c/3}\\
&+55148544\,l^3-1055932416\,l^2-6642155520\,l,\\
CD_{GBAL}^5=&~
(16\,kl-27\,k)\sqrt{32c/3}-96\,l-12816,\\
CD_{GBAL}^6=&~
(16\,kl-27\,k)\sqrt{32c/3}-96\,l+13104,\\
\end{split}
\end{equation}

%\begin{table}[htbp]
%	\centering 
%	\caption{Stability Condition of $T_{GBAL}$ at Selected Sample Points}
%	\label{tab:gba-sample} 
%	 
%	\begin{tabular}{|l|c|c|c|c|c|c|}  
%		\hline  % 表格的横线
%		 sample point of $(k,l,c)$ & $CD_{GBAL}^1>0$ & $CD_{GBA}^2>0$ & $CD_{GBA}^3<0$ & $CD_{GBA}^4<0$ & $CD_{GBAL}^1>0$ & $CD_{GBAL}^1>0$\\ 
%		\hline 
%		(55/64, 109/256, 3/2) & true& true& true& true& true& true \\
%		\hline 
%		(243/128, 109/256, 3/2) & true& false& true& true& true& true  \\
%		\hline
%		(301/32, 109/256, 3/2)& true& false& false& true& true& true\\
%		\hline
%		(271/16, 109/256, 3/2)& true& false& true& true& true& true\\
%		\hline
%		(5725/64, 109/256, 3/2)& true& false& true& false& true& true\\
%		\hline
%		(20771/128, 109/256, 3/2)& true& false& true& false& true& false\\
%		\hline
%		(109/128, 119/128, 3/2)& true& true& true& true& true& true \\
%		\hline
%		(1275/256, 119/128, 3/2)& true& false& true& true& true& true\\
%		\hline
%		(35405/256, 119/128, 3/2)& true& false& true& false& true& true\\
%		\hline
%		(34413/128, 119/128, 3/2)& true& false& true& false& true& false\\
%		\hline
%	\end{tabular}
%\end{table}

\begin{table}[htbp]
	\centering 
	\caption{Stability Condition of $T_{GBAL}$ at Selected Sample Points}
	\label{tab:gbal-sample} 
	 
	\begin{tabular}{|l|c|c|c|}  
		\hline  % 表格的横线
		 sample point of $(k,l,c)$ & $CD_{GBAL}^1>0$ & $CD_{GBAL}^2>0$ & $CD_{GBAL}^3>0$ \\ 
		\hline 
		(55/64, 109/256, 3/2) & true& true& true \\
		\hline 
		(243/128, 109/256, 3/2) & true& false& true  \\
		\hline
		(301/32, 109/256, 3/2)& true& false& false\\
		\hline
		(271/16, 109/256, 3/2)& true& false& true\\
		\hline
		(5725/64, 109/256, 3/2)& true& false& true\\
		\hline
		(20771/128, 109/256, 3/2)& true& false& true\\
		\hline
		(109/128, 119/128, 3/2)& true& true& true\\
		\hline
		(1275/256, 119/128, 3/2)& true& false& true\\
		\hline
		(35405/256, 119/128, 3/2)& true& false& true\\
		\hline
		(34413/128, 119/128, 3/2)& true& false& true\\
		\hline
	\end{tabular}
	
		\begin{tabular}{|l|c|c|c|}  
		\hline  % 表格的横线
		 sample point of $(k,l,c)$ & $CD_{GBAL}^4<0$ & $CD_{GBAL}^5<0$ & $CD_{GBAL}^6>0$\\ 
		\hline 
		(55/64, 109/256, 3/2) & true& true& true \\
		\hline 
		(243/128, 109/256, 3/2) &  true& true& true  \\
		\hline
		(301/32, 109/256, 3/2)&  true& true& true\\
		\hline
		(271/16, 109/256, 3/2)&  true& true& true\\
		\hline
		(5725/64, 109/256, 3/2)& false& true& true\\
		\hline
		(20771/128, 109/256, 3/2)&  false& true& false\\
		\hline
		(109/128, 119/128, 3/2)&  true& true& true \\
		\hline
		(1275/256, 119/128, 3/2)&  true& true& true\\
		\hline
		(35405/256, 119/128, 3/2)&  false& true& true\\
		\hline
		(34413/128, 119/128, 3/2)&  false& true& false\\
		\hline
	\end{tabular}

\end{table}

In order to simplify condition \eqref{eq:inequality-6}, it is also helpful to explore the inclusion relations of these inequalities. Bear in mind that the surfaces $CD_{GBAL}^i=0$ ($i=1,\ldots,6$) divide the parameter space $\{(k,l,c)\,|\,k>0,1\geq l>0,c>0\}$ into regions, and in each of them the signs of $CD_{GBA}^i$ ($i=1,\ldots,6$) would be invariant. Similarly, we use the PCAD method to select at least one sample point from each region.

Table \ref{tab:gbal-sample} lists the selected sample points and shows the verification results of the six inequalities in \eqref{eq:inequality-6} at these sample points. It is observed that at all the sample points where  $CD_{GBAL}^2>0$ is true, the rest inequalities would also be true. In other words, if $CD_{GBAL}^2>0$, then the local stability condition \eqref{eq:cd6-stable} would be satisfied. Thus, condition \eqref{eq:inequality-6} could be simplified to one single inequality. Furthermore, it is easy to see that $CD_{GBAL}^2>0$ is equivalent to
\begin{equation*}
k\sqrt{c}<\frac{2\sqrt{6}(226\,l-441)}{512\,l-1017}.
\end{equation*}
Therefore, we summarize the results in the following proposition.

\begin{proposition}
The $T_{GBAL}$ model described by \eqref{eq:gbal-map} has a unique positive equilibrium 
$$\left(\sqrt{\frac{3}{32c}},\sqrt{\frac{3}{32c}},\sqrt{\frac{3}{32c}},\sqrt{\frac{3}{32c}}\right),$$
 which is locally stable provided that
\begin{equation*}
k\sqrt{c}<\frac{2\sqrt{6}(226\,l-441)}{512\,l-1017}.
\end{equation*}
\end{proposition}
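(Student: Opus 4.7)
The plan is to assemble the three ingredients that have already been laid out in this section: the closed-form Jacobian at the unique positive equilibrium, the dimension-four Schur-Cohn criterion stated in Corollary \ref{cor:4-dim-stable}, and the sample-point analysis recorded in Table \ref{tab:gbal-sample}.

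First I would settle the equilibrium claim. The displayed equilibrium system already produces exactly the two candidates $E^1_{GBAL}$ and $E^2_{GBAL}$; the first has $q_1=0$ and is therefore excluded on economic grounds, while the remaining three equations (which are symmetric in $q_2,q_3,q_4$ once $q_1>0$) force $q_1=q_2=q_3=q_4$, and substitution into the first equation pins the common value down to $\sqrt{3/(32c)}$. This establishes uniqueness of the positive equilibrium.

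Next, using implicit differentiation on the cubic first-order condition defining $R_2$ (exactly as in Section \ref{sec:duopoly}) together with direct differentiation of the $q_1$, $q_3$ and $q_4$ rows of \eqref{eq:gbal-map}, one reproduces the explicit $4\times 4$ matrix $J_{GBAL}(E^2_{GBAL})$ already exhibited. Expanding its characteristic polynomial in the form $\lambda^4+a_3\lambda^3+a_2\lambda^2+a_1\lambda+a_0$ gives coefficients that are polynomial in $k\sqrt{c}$ and $l$. Substituting these coefficients into the six inequalities \eqref{eq:cd6-stable} of Corollary \ref{cor:4-dim-stable} and clearing (positive) denominators reproduces the six conditions $CD_{GBAL}^i$ of \eqref{eq:inequality-6}, so local stability of $E^2_{GBAL}$ is equivalent to \eqref{eq:inequality-6}.

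The technical heart of the proof is to show that $CD_{GBAL}^2>0$ already implies the other five inequalities. Since the $CD_{GBAL}^i$ are polynomials of moderately high degree in $(k,l,c)$, inclusions cannot be read off directly. The strategy is to apply the partial cylindrical algebraic decomposition method to the zero-sets $\{CD_{GBAL}^i=0\}_{i=1}^6$ inside the semi-algebraic region $\{k>0,\ 0<l\le 1,\ c>0\}$, obtaining a finite decomposition into cells on each of which every $CD_{GBAL}^i$ has constant sign. Picking one sample point from every cell yields Table \ref{tab:gbal-sample}, and direct inspection of the table shows that at every sample point where $CD_{GBAL}^2>0$ holds, the five remaining inequalities of \eqref{eq:inequality-6} hold as well. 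Sign-invariance on each cell then upgrades this finite observation to the implication $CD_{GBAL}^2>0 \Longrightarrow$ \eqref{eq:inequality-6}. I expect this inclusion analysis to be the main obstacle, since it depends on a correctly executed PCAD computation and on certifying that Table \ref{tab:gbal-sample} exhausts all cells.

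Finally, because $CD_{GBAL}^1 = kl\sqrt{32c/3}$ is automatically positive under the standing assumptions, local stability reduces to the single inequality $CD_{GBAL}^2>0$. Isolating $k\sqrt{c}$ in $(512l-1017)k\sqrt{32c/3}-3616l+7056>0$ and noting that $512l-1017<0$ for all $l\in(0,1]$ (so the direction of the inequality flips) yields the stated bound $k\sqrt{c}<\tfrac{2\sqrt{6}(226l-441)}{512l-1017}$, completing the proof.
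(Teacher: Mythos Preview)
Your proposal is correct and follows essentially the same route as the paper: identify the unique positive equilibrium, evaluate the Jacobian there, apply the four-dimensional Schur--Cohn criterion of Corollary~\ref{cor:4-dim-stable} to obtain the six $CD_{GBAL}^i$ conditions, and then use the PCAD sample-point analysis of Table~\ref{tab:gbal-sample} to reduce everything to $CD_{GBAL}^2>0$, which is rewritten as the stated bound on $k\sqrt{c}$. Your observation that $512l-1017<0$ forces the inequality to flip when isolating $k\sqrt{c}$ is exactly the right final step.
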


Furthermore, we have the following result.

\begin{proposition}
	The stability region of the $T_{GBAL}$ model is strictly larger than that of $T_{GBA}$.
\end{proposition}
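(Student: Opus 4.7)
The plan is to prove the pointwise strict inequality between the two stability thresholds,
\begin{equation*}
\phi_{GBAL}(l) := \frac{2\sqrt{6}(226\,l-441)}{512\,l-1017} \;>\; \frac{9(101\,l-200)}{2(252\,l-505)} =: \phi_{GBA}(l), \qquad l\in(0,1].
\end{equation*}
First I would note that on $(0,1]$ each of the four linear factors $226l-441$, $252l-505$, $101l-200$, $512l-1017$ is strictly negative, so both $\phi_{GBAL}(l)$ and $\phi_{GBA}(l)$ are positive quotients of two negatives, and the common denominator $2(252l-505)(512l-1017)$ is strictly positive. Multiplying through by that positive quantity preserves the direction of the inequality, so it is enough to establish the positivity of
\begin{equation*}
Q(l):=4\sqrt{6}(226\,l-441)(252\,l-505)-9(101\,l-200)(512\,l-1017)=A\,l^2+B\,l+C,
\end{equation*}
a quadratic in $l$ whose coefficients $A,B,C$ lie in $\mathbb{Z}+\mathbb{Z}\sqrt{6}$ and can be written down explicitly after expanding the two products.

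Unlike the triopoly-versus-duopoly comparison in the preceding proposition---where the difference of thresholds was linear in $l$ and could be settled merely by checking $l=0$ and $l=1$---here $Q$ is genuinely quadratic and an endpoint test alone does not suffice: one must also control where the minimum of $Q$ on $[0,1]$ sits. I would therefore carry out three verifications: (i) the leading coefficient $A>0$ (so the parabola opens upward), (ii) the vertex $l^\ast=-B/(2A)$ is strictly greater than $1$ (so $Q$ is decreasing on $[0,1]$), and (iii) $Q(1)=A+B+C>0$. Each of these inequalities, after splitting into its rational and $\sqrt{6}$-parts, reduces to a single comparison of the shape ``$\sqrt{6}>\text{explicit rational}$''. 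Using the elementary lower bound $\sqrt{6}>49/20$, which follows from $49^2=2401>2400=6\cdot 20^2$, I expect all three comparisons to go through with a comfortable margin. Items (i)--(iii) together force $\min_{l\in(0,1]}Q(l)=Q(1)>0$, which gives $\phi_{GBAL}(l)>\phi_{GBA}(l)$ and completes the argument.

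The main obstacle is purely arithmetic bookkeeping in the expansion of $Q$: the integer coefficients run into the hundreds of thousands, so a single sign slip in $A$, $B$, or $C$ would invalidate (i)--(iii). Conceptually the argument is routine and strictly parallel to the earlier proof that $\phi_{GBA}>\sqrt{2}$; the only structural novelty is the need to locate the vertex of $Q$, since a quadratic on $[0,1]$ is not determined by its endpoint values.
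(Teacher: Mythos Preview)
Your proposal is correct and follows essentially the same route as the paper: clear denominators (noting the relevant linear factors are negative on $(0,1]$), reduce to a single quadratic inequality in $l$, and then analyze the sign of that quadratic on $(0,1]$. The only cosmetic difference is in the last step: the paper asserts that the quadratic (written with the opposite sign convention, hence negative leading coefficient) has both roots strictly greater than $1$, whereas you verify the equivalent facts that $A>0$, the vertex lies to the right of $1$, and $Q(1)>0$; your version is a touch more explicit and avoids locating the roots, but the two arguments are interchangeable.
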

\begin{proof}
It suffices to prove that 
	$$\frac{9(101\,l-200)}{2(252\,l-505)}<\frac{2\sqrt{6}(226\,l-441)}{512\,l-1017},$$
which is equivalent to 
$$9(101\,l-200)(512\,l-1017) < 4\sqrt{6}(252\,l-505)(226\,l-441),$$
and further to
$$ (-227808\sqrt{ 6} + 465408)l^2 + (901048 \sqrt{6} - 1846053)l - 890820\sqrt{6} + 1830600<0.$$
This inequality is satisfied for $0<l\leq 1$ since the left part has a negative leading coefficient and has both of its roots greater than $1$, which completes the proof.
\end{proof}

\section{Game of Five Firms}

Finally, we introduce a special firm, which is a {rational player}, to the model of this section. A \emph{rational player}, quite different from the second player, not only knows clearly the form of the price function, but also has complete information of its rivals' decisions. Because of no information about the rivals, firm 2 just naively  expects that all its competitors produce the same amounts as the last period. Thus, the expected profit of firm 2 at period $t+1$ would be
$$\Pi_2^e(t+1)=\frac{q_2(t+1)}{q_1(t)+q_2(t+1)+q_3(t)+q_4(t)+q_5(t)}-cq_2^2(t+1).$$
In comparison, firm 5 has complete information and know exactly the production plans of all its rivals. Hence, the expected profit of firm 5 would be the real profit, i.e.,
$$\Pi_5^e(t+1)=\Pi_5(t+1)=\frac{q_5(t+1)}{q_1(t+1)+q_2(t+1)+q_3(t+1)+q_4(t+1)+q_5(t+1)}-cq_5^2(t+1).$$
In order to maximize its profit, firm 5 need to solve the first condition $\partial \Pi_5(t+1) / \partial q_5(t+1)=0$ for $q_5(t+1)$. We denote the solution as
$$q_5(t+1)=R_5(q_1(t+1),q_2(t+1),q_3(t+1),q_4(t+1)).$$
It is worth noting that the form of the solution is similar as that of firm 2, but with variables replaced by the output quantities of the rivals at the present period. In short, we have the $5$-dimensional iteration map
\begin{equation}\label{eq:gbalr-map}
\begin{split}
	&T_{GBALR}(q_1,q_2,q_3,q_4,q_5): \\
	&\left\{\begin{split}
		&q_1(t+1)=q_1(t) + k q_1(t)\left[\frac{q_2(t)+q_3(t)+q_4(t)+q_5(t)}{(q_1(t)+q_2(t)+q_3(t)+q_4(t)+q_5(t))^2}-2\,cq_1(t)\right],\\
		&q_2(t+1)=R_2(q_1(t),q_3(t),q_4(t),q_5(t)),\\
		&q_3(t+1)=(1-l)q_3(t)+l R_3(q_1(t),q_2(t),q_4(t),q_5(t)),\\
		&q_4(t+1)=\frac{2\,q_4(t)+q_1(t)+q_2(t)+q_3(t)+q_5(t)}{2(1+c(q_1(t)+q_2(t)+q_3(t)+q_4(t)+q_5(t))^2)},\\
		&q_5(t+1)=R_5(q_1(t+1),q_2(t+1),q_3(t+1),q_4(t+1)).
		\end{split}
	\right.
\end{split}
\end{equation}
Therefore, the equilibria are described by
\begin{equation}
	\left\{\begin{split}
		&k q_1\left(\frac{q_2+q_3}{(q_1+q_2+q_3+q_4+q_5)^2}-2\,cq_1\right)=0,\\
		&q_1+q_3+q_4+q_5-2\,cq_2(q_1+q_2+q_3+q_4+q_5)^2=0,\\
		&q_1+q_2+q_4+q_5-2\,cq_3(q_1+q_2+q_3+q_4+q_5)^2=0,\\
		&q_4-\frac{2\,q_4+q_1+q_2+q_3+q_5}{2(1+c(q_1+q_2+q_3+q_4+q_5)^2)}=0,\\
		&q_1+q_2+q_3+q_4-2\,cq_5(q_1+q_2+q_3+q_4+q_5)^2=0,
	\end{split}
	\right.
\end{equation}
which could be solved by two solutions
\begin{equation}
\begin{split}
E_{GBALR}^1=&~\left(0,\sqrt{\frac{3}{32c}},\sqrt{\frac{3}{32c}},\sqrt{\frac{3}{32c}},\sqrt{\frac{3}{32c}}\right),\\
E_{GBALR}^2=&~\left(\sqrt{\frac{2}{25c}},\sqrt{\frac{2}{25c}},\sqrt{\frac{2}{25c}},\sqrt{\frac{2}{25c}},\sqrt{\frac{2}{25c}}\right).
\end{split}
\end{equation}

For simplicity, we denote the first and the fourth equation in \eqref{eq:gbalr-map} to be 
$$q_1(t+1)=G_1(q_1(t),q_2(t),q_3(t),q_4(t),q_5(t))$$
and 
$$q_4(t+1)=L_4(q_1(t),q_2(t),q_3(t),q_4(t),q_5(t)),$$
respectively. One may find that \eqref{eq:gbalr-map} could be reformulated to the following $4$-dimensional map.
\begin{equation}\label{eq:gbalr-map4}
\begin{split}
	&T_{GBALR}(q_1,q_2,q_3,q_4): \\
	&\left\{\begin{split}
		&q_1(t+1)=G_1(q_1(t),q_2(t),q_3(t),q_4(t),R_5(q_1(t),q_2(t),q_3(t),q_4(t))),\\
		&q_2(t+1)=R_2(q_1(t),q_3(t),q_4(t),R_5(q_1(t),q_2(t),q_3(t),q_4(t))),\\
		&q_3(t+1)=(1-l)q_3(t)+l R_3(q_1(t),q_2(t),q_4(t),R_5(q_1(t),q_2(t),q_3(t),q_4(t))),\\
		&q_4(t+1)=L_4(q_1(t),q_2(t),q_3(t),q_4(t),R_5(q_1(t),q_2(t),q_3(t),q_4(t))).
		\end{split}
	\right.
\end{split}
\end{equation}
Hence, the analysis of the local stability is transformed to the investigation of the Jacobian matrix \eqref{eq:gbalr-map4} of the form

\begin{equation}
	J_{GBALR}=\left[\begin{matrix}
		\frac{\partial q_1(t+1)}{\partial q_1(t)} & \frac{\partial q_1(t+1)}{\partial q_2(t)} & \frac{\partial q_1(t+1)}{\partial q_3(t)} & \frac{\partial q_1(t+1)}{\partial q_4(t)}\\
		\frac{\partial q_2(t+1)}{\partial q_1(t)} & \frac{\partial q_2(t+1)}{\partial q_2(t)} & \frac{\partial q_2(t+1)}{\partial q_3(t)} & \frac{\partial q_2(t+1)}{\partial q_4(t)}\\
		\frac{\partial q_3(t+1)}{\partial q_1(t)} & \frac{\partial q_3(t+1)}{\partial q_2(t)} & \frac{\partial q_3(t+1)}{\partial q_3(t)} & \frac{\partial q_3(t+1)}{\partial q_4(t)}\\
		\frac{\partial q_4(t+1)}{\partial q_1(t)} & \frac{\partial q_4(t+1)}{\partial q_2(t)} & \frac{\partial q_4(t+1)}{\partial q_3(t)} & \frac{\partial q_4(t+1)}{\partial q_4(t)}\\
	\end{matrix}
	\right],
\end{equation}
where
\begin{equation}
\begin{split}
	\frac{\partial q_1(t+1)}{\partial q_i(t)}=&~\frac{\partial G_1}{\partial q_i}+\frac{\partial G_1}{\partial q_5}\frac{\partial R_5}{\partial q_i},~~i=1,2,3,4,\\
	\frac{\partial q_2(t+1)}{\partial q_i(t)}=&~\frac{\partial R_2}{\partial q_i} + l\frac{\partial R_2}{\partial q_5}\frac{\partial R_5}{\partial q_i},~~i=1,3,4,\\
	\frac{\partial q_2(t+1)}{\partial q_2(t)}=&~\frac{\partial R_2}{\partial q_5}\frac{\partial R_5}{\partial q_2},\\
	\frac{\partial q_3(t+1)}{\partial q_i(t)}=&~l\frac{\partial R_3}{\partial q_i} + l\frac{\partial R_3}{\partial q_5}\frac{\partial R_5}{\partial q_i},~~i=1,2,4,\\
	\frac{\partial q_3(t+1)}{\partial q_3(t)}=&~(1-l)+l\frac{\partial R_3}{\partial q_5}\frac{\partial R_5}{\partial q_3},\\
	\frac{\partial q_4(t+1)}{\partial q_i(t)}=&~\frac{\partial L_4}{\partial q_i}+\frac{\partial L_4}{\partial q_5}\frac{\partial R_5}{\partial q_i},~~i=1,2,3,4.\\
\end{split}
\end{equation}

Likewise, we focus on the positive equilibrium $E^2_{GBALR}$, where the Jacobian matrix $J_{GBALR}$ becomes
\begin{equation}
	J_{GBALR}(E^2_{GBALR})=\left[\begin{matrix}
		1-{31\,k\sqrt{2c}}/56 & -3\,k\sqrt{2c}/{56} & -3\,k\sqrt{2c}/{56} & -3\,k\sqrt{2c}/{56}\\
		-{75}/{784} & 9/784 & -{75}/{784} & -{75}/{784}\\
		0 & 0 & 1-25\,l/28 & 0\\
		-{5}/{56} & -{5}/{56} & -{5}/{56} & {13}/{168}\\
	\end{matrix}
	\right].
\end{equation}

According to Corollary \ref{cor:4-dim-stable}, the unique positive equilibrium $E_{GBALR}^2$ is locally stable if and only if the following condition is satisfied.
\begin{equation}\label{eq:inequality-6-5firm}
\begin{split}
	CD_{GBALR}^1>0,~CD_{GBALR}^2>0,~CD_{GBALR}^3<0,\\
	CD_{GBALR}^4<0,~CD_{GBALR}^5<0,~CD_{GBALR}^6>0,
\end{split}
\end{equation}
where
\begin{equation}
\begin{split}
CD_{GBALR}^1=&~
kl\sqrt{25c/2},\\	
CD_{GBALR}^2=&~
(25\,l-56)(5737\,k\sqrt{25c/2}-50860),\\
CD_{GBALR}^3=&~
(3934321875\,k^3l^3-104905111500\,k^3l^2+1172129631120\,k^3l\\
&-1186719653952\,k^3)(\sqrt{25c/2})^3+(-439562531250\,k^2l^3+19054516460000\,k^2l^2\\
&-144796527937600\,k^2l+134072666053760\,k^2)(\sqrt{25c/2})^2+(19706242500000\,kl^3\\
&-579386747450000\,kl^2-1721529608680000\,kl+3133067852544000\,k)\sqrt{25c/2}\\
&-113004562500000\,l^3+1975821995000000\,l^2+12875890524000000\,l\\
&-37485773024000000,\\
CD_{GBALR}^4=&~
(9423\,k^2(\sqrt{25c/2})^2-981050\,k\sqrt{25c/2}-33575000)((3375\,kl^3-89180\,kl^2\\
&-629552\,kl+812224\,k)\sqrt{25c/2}\\
&-22500\,l^3+343000\,l^2+3332000\,l)\\
CD_{GBALR}^5=&~
(225\,kl-252\,k)\sqrt{25c/2}-1500\,l-217840,\\
CD_{GBALR}^6=&~
(225\,kl-252\,k)\sqrt{25c/2}-1500\,l+221200.\\
\end{split}
\end{equation}

By observing Table \ref{tab:gbalr-sample}, we have the following proposition.

\begin{proposition}
The $T_{GBALR}$ model described by \eqref{eq:gbalr-map} has a unique positive equilibrium 
$$\left(\sqrt{\frac{2}{25c}},\sqrt{\frac{2}{25c}},\sqrt{\frac{2}{25c}},\sqrt{\frac{2}{25c}},\sqrt{\frac{2}{25c}}\right),$$
 which is locally stable provided that
\begin{equation*}
k\sqrt{c}<\frac{ 10172\sqrt{2}}{5737}.
\end{equation*}
\end{proposition}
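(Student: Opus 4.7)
The plan is to follow exactly the template already executed for the $T_{GBA}$ and $T_{GBAL}$ models, since the equilibrium $E^2_{GBALR}$ and the Jacobian $J_{GBALR}(E^2_{GBALR})$ have already been exhibited explicitly in the text above. First I would compute the characteristic polynomial $A(\lambda)=\lambda^4+a_3\lambda^3+a_2\lambda^2+a_1\lambda+a_0$ of $J_{GBALR}(E^2_{GBALR})$ as a rational expression in $k$, $l$, $c$, and substitute its coefficients into the six Schur--Cohn inequalities of Corollary~\ref{cor:4-dim-stable}. After clearing denominators and factoring out common positive quantities, these six inequalities become the system \eqref{eq:inequality-6-5firm} in the expressions $CD_{GBALR}^{1},\ldots,CD_{GBALR}^{6}$ that are listed right after the Jacobian.

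The heart of the argument is to show that on the parameter domain $\Omega=\{(k,l,c):k>0,\ 0<l\le 1,\ c>0\}$ the whole system \eqref{eq:inequality-6-5firm} collapses to the single inequality $CD_{GBALR}^{2}>0$. The inequality $CD_{GBALR}^{1}>0$ is immediate because $kl\sqrt{25c/2}>0$ on $\Omega$. For the remaining four, I would invoke the PCAD method of \cite{Collins1991P} applied to the five polynomial surfaces $CD_{GBALR}^{i}=0$ ($i=2,\ldots,6$): these decompose $\Omega$ into finitely many sign-invariant cells, and one sample point per cell is recorded in Table~\ref{tab:gbalr-sample}. The crucial observation to be read off from the table is that in every cell where $CD_{GBALR}^{2}>0$ holds, the remaining inequalities $CD_{GBALR}^{3}<0$, $CD_{GBALR}^{4}<0$, $CD_{GBALR}^{5}<0$ and $CD_{GBALR}^{6}>0$ hold as well. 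Hence $CD_{GBALR}^{2}>0$ is sufficient for local stability at $E^2_{GBALR}$.

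The stated closed form then follows by elementary algebra. Since $0<l\le 1$ gives $25l-56\le-31<0$, the inequality
\[
CD_{GBALR}^{2}=(25l-56)\bigl(5737\,k\sqrt{25c/2}-50860\bigr)>0
\]
is equivalent to $5737\,k\sqrt{25c/2}<50860$. Using $\sqrt{25c/2}=5\sqrt{c}/\sqrt{2}$ and rearranging yields $k\sqrt{c}<10172\sqrt{2}/5737$, which is precisely the bound asserted in the proposition. Note in particular that the bound is independent of $l$, because the $l$-dependent factor in $CD_{GBALR}^{2}$ has fixed sign on $\Omega$.

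The main obstacle is the cell-enumeration step. With five surfaces in a three-parameter space the combinatorics of the PCAD is considerably more intricate than in the $T_{GBA}$ and $T_{GBAL}$ cases, and the whole reduction hinges on Table~\ref{tab:gbalr-sample} containing at least one representative of \emph{every} sign-invariant cell of $\Omega$: a missed cell in which $CD_{GBALR}^{2}>0$ but some other required inequality fails would invalidate the simplification. For this reason I would generate the sample points by a certified PCAD computation rather than by hand, and independently double-check the signs at each listed point by direct numerical substitution.
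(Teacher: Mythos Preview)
Your proposal is correct and follows essentially the same approach as the paper: the paper likewise reduces the six Schur--Cohn inequalities \eqref{eq:inequality-6-5firm} to the single condition $CD_{GBALR}^{2}>0$ by inspecting the PCAD-generated sample points in Table~\ref{tab:gbalr-sample}, and then simplifies this (using $25l-56<0$) to the $l$-independent bound $k\sqrt{c}<10172\sqrt{2}/5737$. Your write-up is in fact more explicit than the paper's, which simply states ``By observing Table~\ref{tab:gbalr-sample}, we have the following proposition.''
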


Furthermore, the following result is acquired.

\begin{proposition}
	The stability region of the $T_{GBALR}$ model is strictly larger than that of $T_{GBAL}$.
\end{proposition}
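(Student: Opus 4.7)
The plan is to reduce the claim to the single clean algebraic inequality
$$\frac{2\sqrt{6}(226\,l-441)}{512\,l-1017} < \frac{10172\sqrt{2}}{5737}$$
for all $l \in (0,1]$, exactly mirroring the proof of the previous $T_{GBAL}$ vs.\ $T_{GBA}$ comparison. Once this is established, the stability regions (both of the form $k\sqrt{c}<(\cdots)$ with the same parameter domain) immediately satisfy the required strict inclusion.

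First I would note that on the domain $l\in(0,1]$ the denominator $512\,l-1017$ is strictly negative (since $512\cdot 1 - 1017 = -505 < 0$). So clearing denominators reverses the inequality and turns the goal into
$$5737\cdot 2\sqrt{6}\,(226\,l-441) > 10172\sqrt{2}\,(512\,l-1017),$$
i.e.\
$$11474\sqrt{6}\,(226\,l-441) - 10172\sqrt{2}\,(512\,l-1017) > 0.$$
This is affine (linear plus constant) in $l$, so it suffices to verify it at the two endpoints $l=0$ and $l=1$ of the closed interval $[0,1]$; the open endpoint at $0$ is handled by continuity / strictness of the endpoint inequality.

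At $l=0$ the inequality reduces to $-11474\sqrt{6}\cdot 441 > -10172\sqrt{2}\cdot 1017$, equivalently $5060034\sqrt{3} < 10344924$, which is true since $\sqrt{3}<10344924/5060034\approx 2.044$. At $l=1$ it reduces to $-11474\sqrt{6}\cdot 215 > -10172\sqrt{2}\cdot 505$, equivalently $2466910\sqrt{3} < 5136860$, true since $\sqrt{3}<5136860/2466910\approx 2.083$. Both checks are elementary numerical verifications of $\sqrt{3}<2$.

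There is no real obstacle; the only mild subtlety is keeping track of the sign of $512\,l-1017$ when cross-multiplying (to flip the inequality correctly) and justifying linearity in $l$ after multiplying out, so that checking the two endpoints is enough. The argument is directly parallel to the previous proposition comparing $T_{GBAL}$ with $T_{GBA}$.
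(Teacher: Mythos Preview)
Your proof is correct and follows essentially the same approach as the paper: reduce to the inequality $\frac{2\sqrt{6}(226\,l-441)}{512\,l-1017}<\frac{10172\sqrt{2}}{5737}$, clear denominators (noting the negative sign of $512\,l-1017$), and verify the resulting linear-in-$l$ inequality at the endpoints $l=0$ and $l=1$. Your version is in fact slightly more detailed than the paper's, which simply states that the endpoint checks succeed without carrying out the numerics.
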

\begin{proof}
It suffices to prove that 
	$$\frac{2\sqrt{6}(226\,l-441)}{512\,l-1017}<\frac{ 10172\sqrt{2}}{5737},$$
which is equivalent to 
$$10172\sqrt{2}(1017-512\,l) - 5737\times 2\sqrt{6}(441-226\,l)>0,$$
which is true by checking at $l=0$ and $l=1$. 
\end{proof}

\begin{table}[htbp]
	\centering 
	\caption{Stability Condition of $T_{GBALR}$ at Selected Sample Points}
	\label{tab:gbalr-sample} 
	 
	\begin{tabular}{|l|c|c|c|}  
		\hline  % 表格的横线
		 sample point of $(k,l,c)$ & $CD_{GBALR}^1>0$ & $CD_{GBALR}^2>0$ & $CD_{GBALR}^3<0$ \\ 
		\hline 
		(453/256, 61/128, 1/2) & true& true& true \\
		\hline 
		(1007/256, 61/128, 1/2) & true& false& true  \\
		\hline
		(7183/256, 61/128, 1/2)& true& false& false\\
		\hline
		(6675/128, 61/128, 1/2)& true& false& true\\
		\hline
		(10587/32, 61/128, 1/2)& true& false& true\\
		\hline
		(9755/16, 61/128, 1/2)& true& false& true\\
		\hline
		(453/256, 251/256, 1/2)& true& true& true\\
		\hline
		(1567/256, 251/256, 1/2)& true& false& true\\
		\hline
		(225/8, 251/256, 1/2)& true& false& false\\
		\hline
		(12807/256, 251/256, 1/2)& true& false& true\\
		\hline
		(91267/64, 251/256, 1/2)& true& false& true\\
		\hline
		(89603/32, 251/256, 1/2)& true& false& true\\
		\hline
	\end{tabular}
	
	\begin{tabular}{|l|c|c|c|}  
		\hline  % 表格的横线
		 sample point of $(k,l,c)$ & $CD_{GBALR}^4<0$ & $CD_{GBALR}^5<0$ & $CD_{GBALR}^6>0$ \\ 
		\hline 
		(453/256, 61/128, 1/2) & true& true& true \\
		\hline 
		(1007/256, 61/128, 1/2) & true& true& true  \\
		\hline
		(7183/256, 61/128, 1/2)& true& true& true\\
		\hline
		(6675/128, 61/128, 1/2)& true& true& true\\
		\hline
		(10587/32, 61/128, 1/2)& false& true& true\\
		\hline
		(9755/16, 61/128, 1/2)& false& true& false\\
		\hline
		(453/256, 251/256, 1/2)& true& true& true\\
		\hline
		(1567/256, 251/256, 1/2)& true& true& true\\
		\hline
		(225/8, 251/256, 1/2)& true& true& true\\
		\hline
		(12807/256, 251/256, 1/2)& true& false& true\\
		\hline
		(91267/64, 251/256, 1/2)& false& true& true\\
		\hline
		(89603/32, 251/256, 1/2)& false& true& false\\
		\hline
	\end{tabular}

\end{table}

\section{Concluding Remarks}

\begin{figure}[htbp]
    \centering
    \includegraphics[width=7cm]{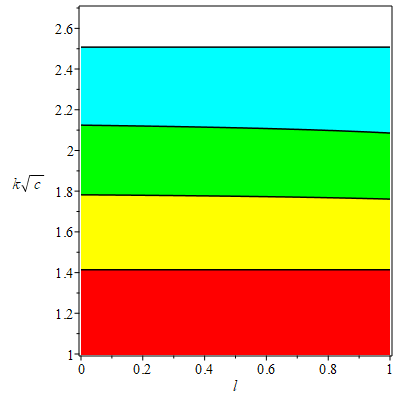}
    \caption{The stability regions of the models considered in the paper. The unique equilibrium of $T_{GB}$ is locally stable if and only if the parameters take values from the red region. The unique positive equilibrium of $T_{GBA}$ is locally stable if and only if the parameters take values from the red and yellow regions. By analogy, similar conclusions can be obtained for the $T_{GBAL}$ and $T_{GBALR}$ models.}
    \label{fig:regions}
\end{figure}

%\section*{Acknowledgments}
%
%The author is grateful to the anonymous referees for their helpful comments. This work has been supported by Philosophy and Social Science Foundation of Guangdong (No. GD21CLJ01), Major Research and Cultivation Project of Dongguan City College (No. 2021YZDYB04Z) and Social Development Science and Technology Project of Dongguan (No. 20211800900692). 

\bibliographystyle{abbrv}
\bibliography{ref_oligopoly}

\begin{thebibliography}{1}

\bibitem{Collins1991P}
G.~E. Collins and H.~Hong.
\newblock Partial cylindrical algebraic decomposition for quantifier
  elimination.
\newblock {\em Journal of Symbolic Computation}, 12(3):299--328, 1991.

\bibitem{Elaydi2005U}
S.~Elaydi.
\newblock {\em An Introduction to Difference Equations}.
\newblock Springer, 3rd edition, 2005.

\bibitem{Jury1976I}
E.~Jury, L.~Stark, and V.~Krishnan.
\newblock Inners and stability of dynamic systems.
\newblock {\em IEEE Transactions on Systems, Man, and Cybernetics},
  (10):724--725, 1976.

\bibitem{Oldenbourg1948T}
R.~C. Oldenbourg and H.~Sartorius.
\newblock {\em The Dynamics of Automatic Controls}.
\newblock American Society of Mechanical Engineers, 1948.

\bibitem{Tramontana2015L}
F.~Tramontana, A.~Elsadany, B.~Xin, and H.~Agiza.
\newblock Local stability of the {Cournot} solution with increasing
  heterogeneous competitors.
\newblock {\em Nonlinear Analysis: Real World Applications}, 26:150--160, 2015.

\bibitem{Tuinstra2004A}
J.~Tuinstra.
\newblock A price adjustment process in a model of monopolistic competition.
\newblock {\em International Game Theory Review}, 6(03):417--442, 2004.

\end{thebibliography}

%\section*{Appendix}
%\tiny
%\begin{align*}
%
%\end{autobreak}\\
%
%\begin{autobreak}
%
%\end{align*}

\end{CJK}
\end{document}